\documentclass[twocolumn,showpacs,amsmath,amssymb,aps,pra]{revtex4-1}

\usepackage{graphicx}
\usepackage{dcolumn}
\usepackage{bm}
\usepackage{mathrsfs}
\usepackage{amssymb}
\usepackage{amsmath}
\usepackage{amsthm}

\newtheorem{proposition}{Proposition}

\newtheorem{lemma}{Lemma}
\newtheorem{corollary}{Corollary}

\theoremstyle{definition}

\newcommand{\bra}[1]{\langle #1|}
\newcommand{\ket}[1]{| #1 \rangle }
\newcommand{\ip}[2]{{\langle #1|}{ #2 \rangle }}
\newcommand{\tr}[1]{{\rm tr}[#1]}
\newcommand{\be}{\begin{eqnarray}}
\newcommand{\ee}{\end{eqnarray}}

\newcommand{\cE}{{\cal E}}

\newcommand{\cT}{{\cal T}}

\newcommand{\cS}{{\cal S}}
\newcommand{\cH}{{\cal H}}

\newcommand{\cO}{{\cal O}}

\begin{document}

\title{Bipartite entanglement-annihilating maps: necessary and sufficient conditions}

\author{Sergey N. Filippov$^{1,2,3}$ and M\'{a}rio Ziman$^{1,4}$}
\affiliation{
$^1$Research Center for Quantum Information, Institute of Physics,
Slovak Academy of Sciences, D\'{u}bravsk\'{a} cesta 9, Bratislava 84511, Slovakia \\
$^2$Moscow Institute of Physics and Technology, Institutskii Pereulok 9, Dolgoprudny, Moscow Region 141700, Russia \\
$^3$Institute of Physics and Technology, Russian Academy of Sciences, Nakhimovskii Pr. 34, Moscow 117218, Russia \\
$^4$Faculty of Informatics, Masaryk University, Botanick\'{a} 68a,
Brno 60200, Czech Republic
}

\begin{abstract}
We fully characterize bipartite entanglement-annihilating (EA)
channels that destroy entanglement of any state shared by
subsystems and, thus, should be avoided in any
entanglement-enabled experiment. Our approach relies on extending
the problem to EA positive maps, the cone of which remains
invariant under concatenation with partially positive maps. Due to
this invariancy, positive EA maps adopt a well characterization
and their intersection with completely positive trace-preserving
maps results in the set of EA channels. In addition to a general
description, we also provide sufficient operational criteria
revealing EA channels. They have a clear physical meaning since
the processes involved contain stages of classical information
transfer for subsystems. We demonstrate the applicability of
derived criteria for local and global depolarizing noises, and
specify corresponding noise levels beyond which any initial state
becomes disentangled after passing the channel. The robustness of
some entangled states is discussed.
\end{abstract}

\pacs{03.65.Ud, 03.65.Yz, 03.67.Mn}

\maketitle

\section{Introduction}
Entanglement is a quantum phenomenon with numerous potential
quantum information
applications~\cite{horodecki-2009,plenio-2007}. However, the
practical realization of such applications is typically faced with
various sources of noise, which affect the performance and design
of entanglement-enabled technologies. It is of practical interest
to understand how entanglement is influenced by any such
experimental imperfections. This problem has stimulated
considerable research efforts, which has introduced the concepts
of entanglement sudden death and
revival~\cite{almeida-2007,bellomo-2007,yu-2009}, entanglement
robustness~\cite{simon-2002,bandyopadhyay-2005,man-2008,minert-2009},
entanglement-breaking~\cite{holevo-1998,king-2002,shor-2002,ruskai-2003,horodecki-2003,holevo-2008}
and entanglement-annihilating
processes~\cite{moravcikova-ziman-2010}.

One of the main lessons of entanglement
theory~\cite{horodecki-2009,plenio-2007} is that the presence of
entanglement is in general extremely difficult to verify.
Therefore, some limitations are typically imposed on both initial
states and noise models in most of the studies on the dynamics of
entanglement~\cite{simon-2002,bandyopadhyay-2005,man-2008,minert-2009,zyczkowski-2001,dodd-2004,dur-2004,aolita-2008,frowis-2011}.
No doubt such analysis is in many cases of great practical
relevance, however, the conclusions do not necessarily capture the
universal behavior of entanglement. In fact, it is questionable
whether some universal entanglement dynamics features do exist.
For example, are there processes capable of creating (not
decreasing) entanglement regardless of the initial state? Or, on
the other hand, are there processes that destroy any entanglement?
Is there some equation capturing the dynamics of entanglement?

The first of these questions resulted in considering various
aspects of entangling and disentangling capabilities of quantum
processes \cite{zanardi-2000,linden-2009}. The other two questions
were mostly studied for one-side noisy processes $\Phi\otimes{\rm
Id}$, where the noise $\Phi$ acts only on one of the subsystems
while the rest of subsystems evolve in a noiseless manner (${\rm
Id}$). For such processes, the so-called evolution equation for
entanglement has been
derived~\cite{konrad-2008,tiersch-2008,gour-2010,zhang-2010}. It
says that the change of the entanglement due to one-sided noise is
quantitatively bounded by its action on the maximally entangled
state. Then, all the noises $\Phi$ that disentangle the maximally
entangled state will also disentangle a given subsystem (under the
noise action) from any other subsystem (noiseless) regardless of
the initial state of the global system, {\it ipso facto} forming a
class of entanglement-breaking (EB)
processes~\cite{holevo-1998,king-2002,shor-2002,ruskai-2003,horodecki-2003,holevo-2008}.

In practice, however, the noise is rarely one-sided. This is the
reason why the notion of entanglement-annihilating (EA) processes
was introduced in Ref.~\cite{moravcikova-ziman-2010}. Formally,
the noise (not necessarily one-sided, or local) is EA if its
action disentangles all the subsystems forming the composite
system. EA processes acting on a composite system do not
necessarily disentangle the system from its surrounding, they only
destroy entanglement between subsystems accessible in experiment.
For instance, it could happen that the joint action of local
noises on individual subsystems constitutes an EA process even if
none of the local noises is
EB~\cite{moravcikova-ziman-2010,filippov-rybar-ziman-2012}.

Although EA processes impose fundamental limitations on the
performance of entanglement-enabled experiments, they are not
explored much. In this paper, we provide explicit characterization
of general bipartite EA channels and derive sufficient criteria
for their detection. We employ these criteria to specify the
maximal noise levels above which no entanglement can be preserved.

\section{Preliminaries}
The states of a quantum system associated with a $d$-dimensional
Hilbert space $\cH_d$ are identified with density operators
(positive and unit trace) and form a convex set $\cS(\cH_d)$.
Quantum processes are modelled as channels, i.e. completely
positive trace-preserving (CPT) linear maps $\Phi:\cT(\cH_{\rm
in})\to\cT(\cH_{\rm out})$ on trace-class operators $\cT(\cH_{\rm
in})$. We say a state of the system $S$ composed of subsystems
$\mathscr{A,B,\dots}$ is separable if $\varrho= \sum_j p_j
\varrho^{\mathscr{A}}_j\otimes\varrho^{\mathscr{B}}_j\otimes\cdots$,
with $\{p_j\}$ being a probability distribution. Otherwise it is
called entangled. We say a channel $\Phi^S \equiv
\Phi^{\mathscr{AB\cdots}}$ is EA if $\Phi^S[\varrho^S]$ is
separable (w.r.t. partition $\mathscr{A|B|C}|\ldots$) for all
input states. $\Phi^S$ is EB if $(\Phi^S\otimes{\rm
Id}^E)[\omega^{SE}]$ is separable w.r.t. partition $S|E$ for all
states $\omega^{SE}$ of system $S$ and an arbitrary environment
$E$.

Quantum channel $\Phi$ can be written in a (non-unique) sum
diagonal representation $\Phi[\varrho] = \sum_k A_k \varrho
A_k^{\dag}$, where Kraus operators $A_k$  satisfy the
normalization $\sum_k A_k^{\dag}A_k = I_{\rm in}$ (identity
operator). Because of that, the channel $\Phi$ can be seen as a
sum of conditional outputs of a measurement, in which the outcomes
$k$ are occurring with probability $p_k=\tr{\varrho
A_k^{\dag}A_k}$ while the state is undergoing the conditional
(post-selected) transformation $\varrho\mapsto p_k^{-1}A_k \varrho
A_k^\dagger$ [Figs.~\ref{figure1}(a) and \ref{figure1}(b)].

Any linear map $\Phi:\cT(\cH_{\rm in}^S)\rightarrow\cT(\cH_{\rm
out}^S)$ can be described by a so-called Choi
matrix~\cite{choi-1975,jamiolkowski-1972}
\begin{equation}
\label{choi-matrix} \Omega_{\Phi}^{SS'} := (\Phi^S \otimes {\rm
Id}^{S'})[\ket{\Psi_+^{SS'}}\bra{\Psi_+^{SS'}}],
\end{equation}
where $\ket{\Psi_+^{SS'}} = (d^S)^{-1/2} \sum_{i=1}^{d^S}
\ket{i\otimes i'}$ is a maximally entangled state shared by system
$S$ and its clone $S'$, $\ip{i}{j}=\ip{i'}{j'}=\delta_{ij}$. It is
well known~\cite{choi-1975,jamiolkowski-1972} that the map
$\Phi^S$ is completely positive (CP) if and only if
$\Omega_{\Phi}^{SS'} \ge 0$, i.e.
$\Omega_{\Phi}^{SS'}\in\cS(\cH_{\rm out}^{S}\otimes\cH_{\rm
in}^{S'})$. The matrix in Eq.~(\ref{choi-matrix}) defines the map:
\begin{equation}
\label{map-through-choi} \Phi [X] = d^S \, {\rm tr}_{S'} \, [
\,\Omega_{\Phi}^{SS'} (I_{\rm out}^S \otimes X^{\rm T}) \, ],
\end{equation}
where $X^{\rm T}=\sum_{i,j} \bra{j} X \ket{i} \ket{i'}\bra{j'} \in
\cT(\cH_{\rm in}^{S'})$ and ${\rm tr}_{S'}$ denotes the partial
trace operation.

\begin{figure}
\includegraphics[width=8.5cm]{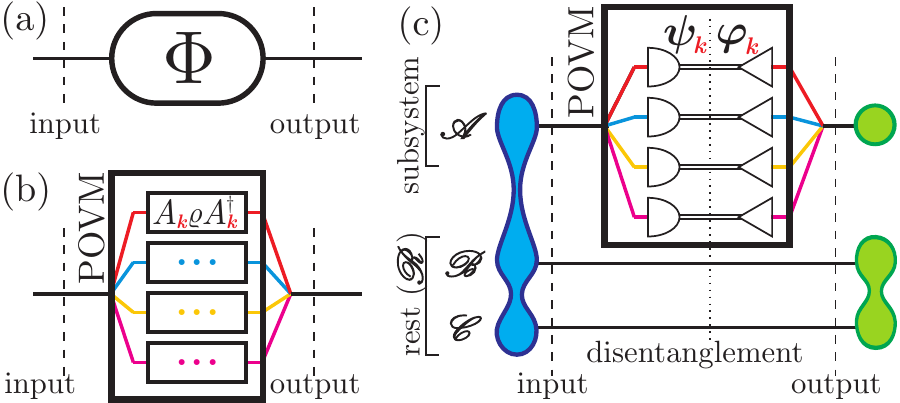}
\caption{\label{figure1} (a) Quantum channel as an input--output
device. (b) Physical interpretation of the diagonal-sum
representation. (c) Structure of EB channels.}
\end{figure}

A general positive linear map $\Lambda$ that transforms positive
operators into positive ones gives rise to the Choi matrix which
can be non-positive in general. We say an operator
$\xi\in\cT(\cH^{\mathscr{X}}\otimes\cH^{\mathscr{Y}})$ is
block-positive  (denoted as $\xi_{\rm BP}^{\mathscr{X|Y}}$) if
$\bra{x\otimes y} \xi \ket{x \otimes y} \ge 0$ for all
$\ket{x}\in\cH^{\mathscr{X}}$, $\ket{y}\in\cH^{\mathscr{Y}}$.
Then, $\{\Lambda^S$ is positive$\} \Leftrightarrow
\{\Omega_{\Lambda}^{SS'}$ is
block-positive$\}$~\cite{jamiolkowski-1972}.

\subsection*{EB channels} Suppose subsystem $\mathscr{A}$ is
subjected to a quantum channel $\Phi^{\mathscr{A}}$ whose Kraus
operators are rank-1 projectors, i.e. $A_k \propto \ket{\varphi_k}
\bra{\psi_k}$ with $\ket{\psi_k}\in\cH_{\rm in}^{\mathscr{A}}$ and
$\ket{\varphi_k}\in\cH_{\rm out}^{\mathscr{A}}$. In this case, we
deal with a measure-and-prepare procedure, i.e. the channel of
Holevo form~\cite{holevo-1998,horodecki-2003,holevo-2008}. Such
channel $\Phi^{\mathscr{A}}$ is EB and disentangles $\mathscr{A}$
from all rest subsystems $\mathscr{B,C,\ldots(=R)}$ because
contains a stage of classical information transfer (depicted by
double line in Fig.~\ref{figure1}c). Surprisingly, the converse
statement is also true, i.e. $\{\Phi$ is EB$\} \Leftrightarrow
\{$there exists a diagonal sum representation of $\Phi$ with
rank-1 Kraus operators$\}$. Alternative characterization of EB
channels exploits the property of Choi matrix:
$\{\Phi^{\mathscr{A}}$ is EB$\} \Leftrightarrow
\{\Omega_{\Phi}^{\mathscr{AA'}}\in\cS(\cH_{\rm
out}^{\mathscr{A}}\otimes\cH_{\rm in}^{\mathscr{A'}})$ is
separable w.r.t. partition
$\mathscr{A}|\mathscr{A'}\}$~\cite{ruskai-2003,horodecki-2003,holevo-2008}.

As far as EB channels $\Phi_{\rm EB}^{\mathscr{A\!B}}$ acting on a
composite system $\mathscr{A\!B}$ are concerned, the Choi state
$\Omega_{\Phi}^{\mathscr{A\!BA'\!B'}}$ is to be separable w.r.t.
partition $\mathscr{A\!B|A'\!B'}$ but can still be entangled
w.r.t. partitions $\mathscr{A|BA'\!B'}$ and $\mathscr{B|AA'\!B'}$
(for instance, if $\Omega_{\Phi}^{\mathscr{A\!BA'\!B'}}$ is the
Smolin state of 4 qubits~\cite{smolin-2001}). In this case, the
channel disentangles ($\mathscr{AB}$) from any other systems
$\mathscr{C,D},\ldots$ but the entanglement between $\mathscr{A}$
and $\mathscr{B}$ can be preserved. However, if the channel
$\Phi^{\mathscr{A\!B}}$ has a local structure $\Phi_{\rm
local}^{\mathscr{A\!B}} = \Phi_1^{\mathscr{A}} \otimes
\Phi_2^{\mathscr{B}}$, then $\{\Phi_{\rm local}^{\mathscr{A\!B}}$
is EB$\} \Leftrightarrow \{\Phi_1^{\mathscr{A}}$ is EB and
$\Phi_2^{\mathscr{B}}$ is EB$\}$, which follows immediately from
the particular form of the maximally entangled state
$\ket{\Psi_+^{\mathscr{A\!B|A'\!B'}}} := \ket{\Psi_+^{\mathscr{AA'}}} \otimes
\ket{\Psi_+^{\mathscr{BB'}}}$.

\subsection*{EA channels}
In contrast to EB channels, EA channels by definition act on
composite systems. For bipartite systems one can use the Horodecki
criterion~\cite{horodecki-1996} to formulate a necessary and
sufficient condition for the map to be EA.
\begin{lemma}
\label{lemma-EA} Suppose $\Phi^{\mathscr{A\!B}}:\cT(\cH_{\rm
in}^{\mathscr{A\!B}})\rightarrow\cT(\cH_{\rm
out}^{\mathscr{A\!B}})$ is a channel. Then
$\{\Phi^{\mathscr{A\!B}}$ is EA$\}\Leftrightarrow\{({\rm
Id}^{\mathscr{A}}\otimes\Lambda^{\mathscr{B}})\circ\Phi^{\mathscr{A\!B}}$
is a positive map for any positive map
$\Lambda^{\mathscr{B}}:\cT(\cH_{\rm
out}^{\mathscr{B}})\rightarrow\cT(\cH_{\rm out}^{\mathscr{A}})\}$.
\end{lemma}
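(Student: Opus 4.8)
The plan is to read the claim as a direct consequence of the Horodecki separability criterion together with the elementary characterization of positive maps. Recall that the Horodecki criterion asserts that a bipartite operator $\sigma \in \cS(\cH_{\rm out}^{\mathscr{A}} \otimes \cH_{\rm out}^{\mathscr{B}})$ is separable with respect to $\mathscr{A}|\mathscr{B}$ if and only if $({\rm Id}^{\mathscr{A}} \otimes \Lambda^{\mathscr{B}})[\sigma] \ge 0$ for \emph{every} positive map $\Lambda^{\mathscr{B}}:\cT(\cH_{\rm out}^{\mathscr{B}})\to\cT(\cH_{\rm out}^{\mathscr{A}})$. The decisive structural feature I would exploit is that the family of test maps appearing here is universal: it is one and the same family irrespective of which operator $\sigma$ is being probed.

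First I would unfold the definition of EA. By construction, $\Phi^{\mathscr{A\!B}}$ is EA exactly when the output $\Phi^{\mathscr{A\!B}}[\varrho]$ is separable for every input state $\varrho\in\cS(\cH_{\rm in}^{\mathscr{A\!B}})$. Feeding each such output into the Horodecki criterion converts this into the requirement that $({\rm Id}^{\mathscr{A}}\otimes\Lambda^{\mathscr{B}})[\Phi^{\mathscr{A\!B}}[\varrho]]\ge 0$ holds simultaneously for all states $\varrho$ and all positive maps $\Lambda^{\mathscr{B}}$. Because the test family does not depend on $\varrho$, I can freely interchange the two universal quantifiers and group the operators as $\big(({\rm Id}^{\mathscr{A}}\otimes\Lambda^{\mathscr{B}})\circ\Phi^{\mathscr{A\!B}}\big)[\varrho]\ge 0$, now read for each fixed $\Lambda^{\mathscr{B}}$ over all input states $\varrho$.

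The final step is to recognize that, for a fixed positive map $\Lambda^{\mathscr{B}}$, the statement ``$\big(({\rm Id}^{\mathscr{A}}\otimes\Lambda^{\mathscr{B}})\circ\Phi^{\mathscr{A\!B}}\big)[\varrho]\ge 0$ for every state $\varrho$'' is equivalent to positivity of the composed map itself, since every positive input operator is a nonnegative multiple of a density operator. Reassembling, $\Phi^{\mathscr{A\!B}}$ is EA if and only if $({\rm Id}^{\mathscr{A}}\otimes\Lambda^{\mathscr{B}})\circ\Phi^{\mathscr{A\!B}}$ is positive for every positive $\Lambda^{\mathscr{B}}$, which is the assertion; both implications emerge at once from the same chain of equivalences read in either direction. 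The proof harbors no genuine difficulty — the only point requiring care is the quantifier interchange and the attendant passage from ``positive on all states'' to ``positive map,'' which rests on the universality of the Horodecki test family and on the codomain $\cH_{\rm out}^{\mathscr{A}}$ being exactly the one for which that criterion is known to be complete.
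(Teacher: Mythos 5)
Your proof is correct and follows essentially the same route as the paper, which presents Lemma~\ref{lemma-EA} as a direct consequence of the Horodecki separability criterion applied to every output state $\Phi^{\mathscr{A\!B}}[\varrho]$. The quantifier interchange and the passage from ``positive on all states'' to ``positive map'' are exactly the (implicit) content of the paper's one-line justification, so nothing is missing.
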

\noindent Unfortunately, Lemma~\ref{lemma-EA} is not quite
operational, which makes it difficult to apply. However, in the
case of two qubits ($d_{\rm out}^{\mathscr{A,B}} = 2$),
Lemma~\ref{lemma-EA} turns out to be rather fruitful because
without loss of generality the positive map
$\Lambda^{\mathscr{B}}$ can be chosen to be either a
transposition~\cite{peres-1996} or a reduction
map~\cite{horodecki-reduction-1999}. This fact was exploited in
characterization of local two-qubit EA channels in
Ref.~\cite{filippov-rybar-ziman-2012}. Particularly interesting in
the case of bipartite local channels $\Phi^{\mathscr{A\!B}}$ with
$d^{\mathscr{A}}=d^{\mathscr{B}}$ are those that form
$\Phi\otimes\Phi$ describing the physical situations when both
parties experience the same noise.
Following~\cite{moravcikova-ziman-2010}, if $\Phi\otimes\Phi$ is
EA, we will refer to a ``generating'' channel $\Phi$ as a
2-locally EA channel (2LEA).

\subsection*{Structure of linear bipartite maps}
To investigate the structure of EA channels it turns out to be
instructive to introduce the concept of positive
entanglement-annihilating (PEA) linear maps. In particular, a map
$\Phi^{\mathscr{A\!B}}$ is PEA if it is positive and
$\Phi^{\mathscr{A\!B}}[\varrho]$ belongs to a cone of states
separable w.r.t. partition $\mathscr{A|B}$ for all
$\varrho\in\cS(\cH_{\rm in}^{\mathscr{A\!B}})$. The set of PEA
maps is convex and its intersection with CPT maps gives exactly
all EA channels, i.e. EA = PEA $\cap$ CPT.

Consider an example of 2-locally unital qubit linear
trace-preserving maps, i.e. maps of the form
$\Upsilon\otimes\Upsilon$ with $\Upsilon[I]=I$. Up to a unitary
preprocessing and postprocessing the map $\Upsilon$ can be
written~\cite{ruskai-2002} in the form $\Upsilon [X] = \frac{1}{2}
\sum_{j=0}^3 \lambda_j \tr{\sigma_j X} \sigma_j$, where
$\{\lambda_j\}$ are real numbers, $\sigma_0 = I$, and
$\{\sigma_i\}_{j=1}^3$ is a conventional set of Pauli operators
(in an appropriate basis). Due to the trace-preserving condition,
$\lambda_0= 1$. The remaining three parameters
$\{\lambda_j\}_{j=1}^3$ are scaling coefficients of Bloch ball
axes. The map $\Upsilon$ is given by a point in the Cartesian
coordinate system $(\lambda_1,\lambda_2,\lambda_3)$ and the
following relations hold:
\newline(\!{\it i}) $\{\Upsilon$ is
positive$\}\Leftrightarrow\{|\lambda_j|\le 1, j=1,2,3\}$;
\newline(\!{\it ii}) $\{\Upsilon$ is CP$\}\Leftrightarrow\{\Upsilon\otimes\Upsilon$
is CP$\}\Leftrightarrow 1\pm\lambda_3\geq |\lambda_1\pm
\lambda_2|$;
\newline(\!{\it iii}) $\{\Upsilon$ is EB$\}\Leftrightarrow\{\Upsilon\otimes\Upsilon$ is EB$\}\Leftrightarrow\{|\lambda_1|+|\lambda_2|+|\lambda_3| \le 1\}$;
\newline(\!{\it iv}) $\{\Upsilon\otimes\Upsilon$ is
positive$\}\Leftrightarrow\{\Upsilon^2$ is CP$\}\Leftrightarrow
1\pm\lambda_3^2\geq |\lambda_1^2\pm \lambda_2^2|$;
\newline(\!{\it v}) $\{\Upsilon\otimes\Upsilon$ is PEA$\}\Leftrightarrow\{\Upsilon^2$ is EB$\}\Leftrightarrow\{\lambda_1^2+\lambda_2^2+\lambda_3^2\le 1\}$;
\newline(\!{\it vi}) $\{\Upsilon\otimes\Upsilon$ is EA$\}\Leftrightarrow\{\Upsilon$ is CP and $\Upsilon^2$ is
EB$\}$.

Clearly (\!{\it vi}) = (\!{\it ii}) $\cap$ (\!{\it v}). These are
analogies [(\!{\it ii})$\sim$(\!{\it iii})] and [(\!{\it
iv})$\sim$(\!{\it v})] that stimulated us to extend the concept of
entanglement annihilation to positive maps. Indeed, definitions of
both the CP and EB maps require extensions of the channel action,
whereas the concepts of positive and PEA maps do not require for
their definition any additional physical system. The structure of
linear bipartite maps is illustrated in Fig.~\ref{figure2}.

\section{Criteria} The appealing simplicity of item (\!{\it v}) above is not
sudden and holds due to a general property that the cone of PEA
maps is closed under left-composition by partially positive maps
${\rm Id}^{\mathscr{A}}\otimes\Lambda^{\mathscr{B}}$ (left
PP-invariant), which follows from Lemma~\ref{lemma-EA}. This
fundamental property enables us to characterize PEA maps.
\begin{proposition}
\label{prop-1} The map $\Phi^{\mathscr{A\!B}}$ is PEA if and only
if
\begin{equation}
\label{proposition-1} {\rm tr} \left[ (\xi_{\rm
BP}^{\mathscr{A|B}}\otimes\varrho^{\mathscr{A'\!B'}})
\Omega_{\Phi}^{\mathscr{A\!BA'\!B'}} \right] \ge 0
\end{equation}
\noindent for all block-positive $\xi_{\rm BP}^{\mathscr{A|B}}$
and $\varrho^{\mathscr{A'\!B'}} \in
\cS(\cH^{\mathscr{A'}}\!\otimes\cH^{\mathscr{B'}})$.
\end{proposition}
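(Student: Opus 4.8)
The plan is to reduce the claim to the standard duality between the cone of separable states and the cone of block-positive operators, and then to transport that duality through the Choi isomorphism. By definition $\Phi^{\mathscr{A\!B}}$ is PEA exactly when $\Phi^{\mathscr{A\!B}}[\varrho]$ lies in the cone of $\mathscr{A|B}$-separable operators for every input state $\varrho$. The first step is therefore to recall the entanglement-witness characterization of separability: a Hermitian operator $\sigma^{\mathscr{A\!B}}$ is separable if and only if $\tr{\xi_{\rm BP}^{\mathscr{A|B}}\sigma^{\mathscr{A\!B}}}\ge 0$ for every block-positive $\xi_{\rm BP}^{\mathscr{A|B}}$. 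This is the dual-cone (Horodecki) criterion, and it simultaneously encodes positivity, because every positive semidefinite operator is itself block-positive, so testing against all witnesses already forces $\sigma\ge 0$; in particular the two requirements in the definition of PEA (positivity and separable output) are captured at once.

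Applying this to $\sigma=\Phi^{\mathscr{A\!B}}[\varrho]$, the PEA property becomes the statement that $\tr{\xi_{\rm BP}^{\mathscr{A|B}}\Phi^{\mathscr{A\!B}}[\varrho]}\ge 0$ for all block-positive $\xi_{\rm BP}^{\mathscr{A|B}}$ and all input states $\varrho$. The second step is to rewrite this overlap through the Choi matrix. Inserting $\Phi[\varrho]=d^S\,{\rm tr}_{S'}[\Omega_{\Phi}^{SS'}(I_{\rm out}^S\otimes\varrho^{\rm T})]$ from Eq.~(\ref{map-through-choi}), pulling $\xi_{\rm BP}^{\mathscr{A|B}}$ under the partial trace, and using that the factors supported on $S$ and $S'$ commute together with cyclicity of the trace, one obtains
\begin{equation}
\tr{\xi_{\rm BP}^{\mathscr{A|B}}\,\Phi^{\mathscr{A\!B}}[\varrho]}=d^S\,\tr{(\xi_{\rm BP}^{\mathscr{A|B}}\otimes\varrho^{\rm T})\,\Omega_{\Phi}^{\mathscr{A\!BA'\!B'}}},
\end{equation}
where $\varrho^{\rm T}$ acts on the cloned system $\mathscr{A'\!B'}$. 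The positive prefactor $d^S$ does not affect the sign.

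The final step handles the quantifiers. Transposition is a bijection of the state space onto itself, so as $\varrho$ ranges over all of $\cS(\cH_{\rm in}^{\mathscr{A\!B}})$ its transpose $\varrho^{\rm T}$ ranges over all of $\cS(\cH^{\mathscr{A'}}\!\otimes\cH^{\mathscr{B'}})$; relabelling $\varrho^{\rm T}$ as $\varrho^{\mathscr{A'\!B'}}$ turns the condition verbatim into Eq.~(\ref{proposition-1}), and reading the resulting chain of equivalences in both directions yields the ``if'' and ``only if'' parts at once. The point I expect to need the most care is precisely this bookkeeping of quantifiers: since the PEA definition tests \emph{all} input states $\varrho$ — including states entangled across $\mathscr{A|B}$ — the transposed test operator $\varrho^{\mathscr{A'\!B'}}$ must sweep the \emph{entire} state space of $\mathscr{A'\!B'}$ rather than only its separable subset, and one should confirm that the proposition is indeed to be read this way. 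The separability--witness duality is classical, so the substantive work is verifying this quantifier step and the Choi identity above, both of which are routine once the correspondence is set up correctly.
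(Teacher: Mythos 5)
Your proof is correct, but it takes a genuinely different route from the paper's. The paper works at the level of maps: it invokes the positive-map extension of Lemma~\ref{lemma-EA}, so that PEA is equivalent to positivity of every composition $({\rm Id}^{\mathscr{A}}\otimes\Lambda^{\mathscr{B}})\circ\Phi^{\mathscr{A\!B}}$; it then converts each such positivity statement into block-positivity of the composed Choi matrix $({\rm Id}^{\mathscr{A\!A'\!B'}}\otimes\Lambda^{\mathscr{B}})[\Omega_{\Phi}^{\mathscr{A\!BA'\!B'}}]$, and finally shifts $\Lambda$ onto the test vector as its dual, so that the block-positive witness appears in the form $\xi_{\rm BP}^{\mathscr{A|B}}=({\rm Id}^{\mathscr{A}}\otimes\Lambda^{\dag\mathscr{B}})[\ket{\varphi^{\mathscr{A\!B}}}\bra{\varphi^{\mathscr{A\!B}}}]$. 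You instead work at the level of states: you apply the entanglement-witness (dual-cone) form of the Horodecki criterion directly to each output $\Phi^{\mathscr{A\!B}}[\varrho]$, and transport the resulting trace condition onto the Choi matrix via the inversion formula~(\ref{map-through-choi}), finishing with the observation that transposition is a bijection of the state space. The two routes rest on the same underlying cone duality, but they are not the same argument: yours bypasses Lemma~\ref{lemma-EA} and dual maps entirely, and it also avoids a point the paper leaves implicit in its appeal to ``arbitrariness of $\Lambda,\ket{\varphi},\ket{\chi}$'' --- namely that every block-positive operator really does arise as $({\rm Id}\otimes\Lambda^{\dag})[\ket{\varphi}\bra{\varphi}]$, which needs the Jamio{\l}kowski correspondence to justify the quantifier over all $\xi_{\rm BP}$ in the ``only if'' direction. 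What the paper's route buys in exchange is structural: it exhibits Proposition~\ref{prop-1} as a direct consequence of the left PP-invariance of the PEA cone, which is the organizing property the paper emphasizes and reuses. Your own flagged concern --- that the transposed test state must sweep all of $\cS(\cH^{\mathscr{A'}}\!\otimes\cH^{\mathscr{B'}})$, entangled states included --- is resolved exactly as you say, and matches the proposition as stated.
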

\begin{proof}
Using the extension of Lemma~\ref{lemma-EA} for positive maps, we
get $\{\Phi^{\mathscr{A\!B}}$ is PEA$\}\Leftrightarrow \{({\rm
Id}^{\mathscr{A}}\otimes\Lambda^{\mathscr{B}})\circ\Phi^{\mathscr{A\!B}}$
is a positive map for any positive map $\Lambda^{\mathscr{B}}\}$,
which is equivalent to the block-positivity of matrix
$\Omega_{({\rm Id}\otimes\Lambda)\circ\Phi}^{\mathscr{A\!BA'\!B'}}
\equiv ({\rm
Id}^{\mathscr{AA'\!B'}}\otimes\Lambda^{\mathscr{B}})[\Omega_{\Phi}^{\mathscr{A\!BA'\!B'}}]$.
By definition of block-positivity,
\begin{eqnarray}
&& {\rm tr} \Big\{ \ket{\varphi^{\mathscr{A\!B}} \otimes
\chi^{\mathscr{A'\!B'}}} \bra{\varphi^{\mathscr{A\!B}} \otimes
\chi^{\mathscr{A'\!B'}}} \nonumber\\
&& \qquad\qquad\qquad\qquad\quad \times ({\rm
Id}^{\mathscr{A\!A'\!B'}} \otimes \Lambda^{\mathscr{B}})
[\Omega_{\Phi}^{\mathscr{A\!BA'\!B'}}] \Big\} \nonumber\\
&& \equiv {\rm tr} \Big\{ ({\rm Id}^{\mathscr{A}} \otimes
\Lambda^{\dag\mathscr{B}})
[\ket{\varphi^{\mathscr{A\!B}}}\bra{\varphi^{\mathscr{A\!B}}}]
\nonumber\\
&& \qquad\qquad\qquad\qquad\quad \otimes
\ket{\chi^{\mathscr{A'\!B'}}}\bra{\chi^{\mathscr{A'\!B'}}} \
\Omega_{\Phi}^{\mathscr{A\!BA'\!B'}} \Big\} \ge 0, \nonumber
\end{eqnarray}

\noindent where $\Lambda^{\dag}$ denotes the dual map:
$\tr{X\Lambda[Y]} \equiv \tr{\Lambda^{\dag}[X] Y}$. Since the dual
of a positive map is also positive (see,
e.g.,~\cite{johnston-2011}), $\Lambda^{\dag}$ is a positive map
and the operator $({\rm
Id}^{\mathscr{A}}\!\otimes\!\Lambda^{\dag\mathscr{B}})
[\ket{\varphi^{\mathscr{A\!B}}}\bra{\varphi^{\mathscr{A\!B}}}]$ is
block-positive (equals $\xi_{\rm BP}^{\mathscr{A|B}}$). Taking
into account arbitrariness of $\Lambda,\ket{\varphi},\ket{\chi}$
and remembering the convex structure of density operators, we
obtain formula (\ref{proposition-1}).
\end{proof}

Proposition~\ref{prop-1} says (in terms of Choi matrices) that the
cone of PEA maps  is dual to the cone of operators $\xi_{\rm
BP}^{\mathscr{A|B}}\otimes\varrho^{\mathscr{A'\!B'}}$ inducing
[via formula~(\ref{map-through-choi})] maps of the form $\Phi_{\rm
d.c.}^{\mathscr{A\!B}}[X] = \sum_{k} \tr{F_k X} \xi_{{\rm
BP}~k}^{\mathscr{A|B}}$, $F_k \ge 0$. Moreover, using ${\rm tr} =
{\rm tr}_{\mathscr{A\!B}} \circ {\rm tr}_{\mathscr{A'\!B'}}$, we
obtain alternative forms of the condition in
Eq.~(\ref{proposition-1}). In particular, $\Phi^{\mathscr{A\!B}}$
is PEA if and only if for all $\xi_{\rm BP}^{\mathscr{A|B}}$ the
operator ${\rm tr}_{\mathscr{A\!B}} [ (\xi_{\rm
BP}^{\mathscr{A|B}}\otimes I^{\mathscr{A'\!B'}})
\Omega_{\Phi}^{\mathscr{A\!BA'\!B'}} ]$ is positive, i.e. belongs
to ${\rm Cone}(\cS(\cH^{\mathscr{A'\!B'}}))$, or equivalently, if
the operator $\bra{\chi^{\mathscr{A'\!B'}}}
\Omega_{\Phi}^{\mathscr{A\!BA'\!B'}}
\ket{\chi^{\mathscr{A'\!B'}}}$ belongs to a cone of separable
states (w.r.t. partition $\mathscr{A|B}$) for all
$\ket{\chi^{\mathscr{A'\!B'}}}$.

\begin{figure}
\includegraphics[width=8.5cm]{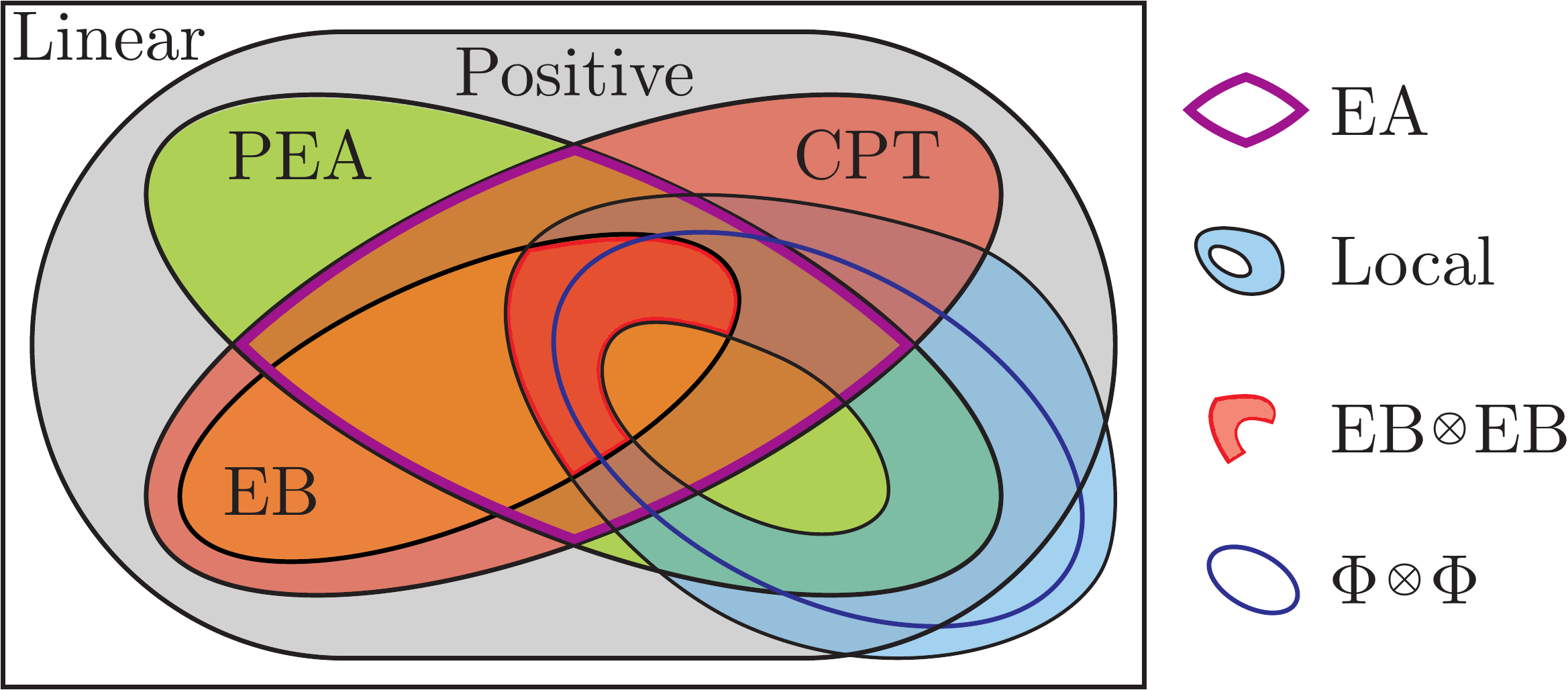}
\caption{\label{figure2} Venn diagram of linear bipartite maps
$\Phi^{\mathscr{A\!B}}$. Convex figures correspond to convex
sets.}
\end{figure}

We already know that EA = PEA $\cap$ CPT, therefore the complete
characterization of EA channels is as follows.
\begin{corollary}
The linear map $\Phi^{\mathscr{A\!B}}$ is an EA channel if and
only if its Choi matrix $\Omega_{\Phi}^{\mathscr{A\!BA'\!B'}}$
satisfies (\ref{proposition-1}),
$\Omega_{\Phi}^{\mathscr{A\!BA'\!B'}} \ge 0$, and ${\rm
tr}_{\mathscr{A\!B}}\Omega_{\Phi}^{\mathscr{A\!BA'\!B'}} =
(d^{\mathscr{A}}d^{\mathscr{B}})^{-1} I^{\mathscr{A'\!B'}}$.
\end{corollary}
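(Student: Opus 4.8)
The plan is to prove the corollary by decomposing it along the identity EA = PEA $\cap$ CPT established above, so that the three displayed conditions correspond one-to-one to the three membership requirements PEA, CP, and TP. Concretely, I would argue that $\Phi^{\mathscr{A\!B}}$ is an EA channel exactly when it is simultaneously PEA, completely positive, and trace-preserving, and then translate each of these three properties into a condition on the single Choi matrix $\Omega_{\Phi}^{\mathscr{A\!BA'\!B'}}$.

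Two of the three translations are already available. The PEA part is supplied verbatim by Proposition~\ref{prop-1}: a map is PEA if and only if (\ref{proposition-1}) holds, which is the first listed condition (and note that, since PEA already subsumes positivity, no separate positivity clause is needed once it is paired with the CP requirement below). The complete-positivity part is handled by the Choi--Jamiolkowski criterion recalled in the preliminaries, namely $\{\Phi^S$ is CP$\}\Leftrightarrow\{\Omega_{\Phi}^{SS'}\ge 0\}$; taking $S=\mathscr{A\!B}$ gives the second condition $\Omega_{\Phi}^{\mathscr{A\!BA'\!B'}}\ge 0$ directly.

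The only point requiring a short computation is the trace-preserving condition, which is the one step I would actually carry out. Here I would expand the Choi matrix from its definition, $\Omega_{\Phi} = (d^S)^{-1}\sum_{i,j}\Phi[\ket{i}\bra{j}]\otimes\ket{i'}\bra{j'}$ with $d^S=d^{\mathscr{A}}d^{\mathscr{B}}$, take the partial trace over $\mathscr{A\!B}$, and obtain ${\rm tr}_{\mathscr{A\!B}}\Omega_{\Phi} = (d^S)^{-1}\sum_{i,j}\tr{\Phi[\ket{i}\bra{j}]}\ket{i'}\bra{j'}$. The trace-preserving property $\tr{\Phi[X]}=\tr{X}$ is equivalent, by linearity on the matrix-unit basis, to $\tr{\Phi[\ket{i}\bra{j}]}=\delta_{ij}$, which collapses the double sum to $(d^{\mathscr{A}}d^{\mathscr{B}})^{-1}I^{\mathscr{A'\!B'}}$; reading the same chain of equalities backward recovers trace preservation from the third condition, giving the desired equivalence.

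Finally I would combine the three equivalences: since EA = PEA $\cap$ CP $\cap$ TP, a map is an EA channel precisely when all three Choi-matrix conditions hold at once. I do not expect any genuine obstacle, as the corollary is essentially a bookkeeping step assembling Proposition~\ref{prop-1} together with two standard Choi-matrix facts; the most delicate point is merely confirming the normalization constant $(d^{\mathscr{A}}d^{\mathscr{B}})^{-1}$ in the trace-preserving condition, which traces back to the $(d^S)^{-1/2}$ prefactor in $\ket{\Psi_+^{SS'}}$.
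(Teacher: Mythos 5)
Your proposal is correct and follows exactly the paper's own argument: the paper's proof is the single observation that the three Choi-matrix conditions correspond respectively to $\Phi\in$ PEA (via Proposition~\ref{prop-1}), $\Phi\in$ CP (via the Choi criterion), and $\Phi$ trace-preserving, combined with EA = PEA $\cap$ CPT. Your only addition is spelling out the standard computation showing ${\rm tr}_{\mathscr{A\!B}}\Omega_{\Phi}^{\mathscr{A\!BA'\!B'}} = (d^{\mathscr{A}}d^{\mathscr{B}})^{-1} I^{\mathscr{A'\!B'}}$ is equivalent to trace preservation, which the paper leaves implicit.
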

\begin{proof}
The three requirements guarantee that $\Phi\in$ PEA, $\Phi\in$ CP,
and $\Phi$ is trace-preserving, respectively.
\end{proof}

Although Proposition~\ref{prop-1} provides the necessary and
sufficient condition for the map to be PEA, it is challenging to
apply it to a given map. The following proposition provides a
non-trivial sufficient condition which is quite useful as we
demonstrate later.

\begin{proposition}
\label{prop-2} If $\Omega_{\Phi}^{\mathscr{A\!BA'\!B'}}$ can be
written as a convex sum of operators $\zeta_{\rm
BP}^{\mathscr{A|A'\!B'}} \otimes \varrho^{\mathscr{B}}$ and
$\varrho^{\mathscr{A}} \otimes \zeta_{\rm
BP}^{\mathscr{B|A'\!B'}}$, where $\zeta_{\rm BP}$ is
block-positive w.r.t. corresponding cut and $\varrho$ is positive,
then the map $\Phi^{\mathscr{A\!B}}$ is PEA.
\end{proposition}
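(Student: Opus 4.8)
The plan is to combine two facts already at hand: the set of PEA maps is a convex cone (a positive combination of positive maps is positive, and a convex mixture of states separable w.r.t.\ $\mathscr{A|B}$ is again separable), and, by Proposition~\ref{prop-1}, membership in this cone is governed by condition~(\ref{proposition-1}), which is \emph{linear} in the Choi matrix. It therefore suffices to show that each elementary summand allowed in the hypothesis --- an operator $\zeta_{\rm BP}^{\mathscr{A|A'\!B'}}\otimes\varrho^{\mathscr{B}}$ or $\varrho^{\mathscr{A}}\otimes\zeta_{\rm BP}^{\mathscr{B|A'\!B'}}$ --- is by itself the Choi matrix of a PEA map; the claim for their convex sum then follows immediately. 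By the $\mathscr{A}\!\leftrightarrow\!\mathscr{B}$ symmetry of the two families I only need to treat the first.

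First I would read off the map generated by $\Omega=\zeta_{\rm BP}^{\mathscr{A|A'\!B'}}\otimes\varrho^{\mathscr{B}}$ through Eq.~(\ref{map-through-choi}). Since the spectator factor $\varrho^{\mathscr{B}}$ is untouched by the partial trace over $\mathscr{A'\!B'}$, the output factorizes across the cut for every input,
\begin{equation}
\Phi[X]=\Big(d^{\mathscr{A}}d^{\mathscr{B}}\,{\rm tr}_{\mathscr{A'\!B'}}\big[\zeta_{\rm BP}^{\mathscr{A|A'\!B'}}(I^{\mathscr{A}}\otimes X^{\rm T})\big]\Big)\otimes\varrho^{\mathscr{B}},
\end{equation}
and is thus manifestly a product operator --- hence separable w.r.t.\ $\mathscr{A|B}$. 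What remains is to confirm that this map is \emph{positive}, i.e.\ that its $\mathscr{A}$-factor is a positive operator whenever $X\ge 0$.

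This positivity check is the crux. Sandwiching the $\mathscr{A}$-factor between $\bra{a}$ and $\ket{a}$ for $\ket{a}\in\cH^{\mathscr{A}}$ reduces it to ${\rm tr}_{\mathscr{A'\!B'}}\big[(\bra{a}\zeta_{\rm BP}^{\mathscr{A|A'\!B'}}\ket{a})\,X^{\rm T}\big]$. The decisive observation is that block-positivity of $\zeta_{\rm BP}$ \emph{along the cut $\mathscr{A|A'\!B'}$} says exactly that $\bra{a\otimes\phi}\zeta_{\rm BP}\ket{a\otimes\phi}\ge 0$ for all $\ket{\phi}\in\cH^{\mathscr{A'\!B'}}$, i.e.\ that the compression $\bra{a}\zeta_{\rm BP}\ket{a}$ is a \emph{positive} operator on $\cH^{\mathscr{A'\!B'}}$. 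As $X\ge 0$ gives $X^{\rm T}\ge 0$, the trace of the product of two positive operators is non-negative, so the $\mathscr{A}$-factor is positive and the map is positive. Each summand of the first family is therefore a positive map with separable output --- a PEA map --- and by symmetry so is each summand of the second.

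Finally, convexity of the PEA cone promotes this to the asserted convex sum, completing the proof. I expect the only real obstacle to be the middle step: recognizing that it is block-positivity along the specific cut $\mathscr{A|A'\!B'}$ (and not, say, $\mathscr{A\!A'|B'}$) that forces positivity of the local map on $\mathscr{A}$ --- everything else is tensor-factor bookkeeping. Equivalently, one may bypass the map picture and verify Eq.~(\ref{proposition-1}) directly: inserting a summand and contracting the $\mathscr{A'\!B'}$ indices against the density operator appearing there leaves a positive product operator of the form $W^{\mathscr{A}}\otimes\varrho^{\mathscr{B}}$ on $\cH^{\mathscr{A\!B}}$ (positivity of $W^{\mathscr{A}}$ again coming from block-positivity of $\zeta_{\rm BP}$), against which block-positivity of $\xi_{\rm BP}^{\mathscr{A|B}}$ evaluated on product vectors yields the required non-negativity.
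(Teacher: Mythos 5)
Your proof is correct, and your main route is genuinely different from the paper's, although both hinge on the same key observation. The paper never leaves the dual picture: it substitutes the summand $\zeta_{\rm BP}^{\mathscr{A|A'\!B'}}\otimes\varrho^{\mathscr{B}}$ directly into criterion~(\ref{proposition-1}) of Proposition~\ref{prop-1}, notes that ${\rm tr}_{\mathscr{A'\!B'}}[\zeta_{\rm BP}^{\mathscr{A|A'\!B'}}\ket{\chi^{\mathscr{A'\!B'}}}\bra{\chi^{\mathscr{A'\!B'}}}]$ is a positive operator $\tilde\varrho^{\mathscr{A}}$, and concludes that pairing $\xi_{\rm BP}^{\mathscr{A|B}}$ with the positive product operator $\tilde\varrho^{\mathscr{A}}\otimes\varrho^{\mathscr{B}}$ is non-negative --- which is exactly the ``bypass'' you sketch in your closing sentences. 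Your primary argument instead works in the primal picture: you reconstruct the map from the Choi matrix via Eq.~(\ref{map-through-choi}) and verify the definition of PEA directly, showing that each summand generates a positive map whose output on every positive input is a positive product operator $W^{\mathscr{A}}\otimes\varrho^{\mathscr{B}}$. This buys two things: (i) self-containedness --- you never need Proposition~\ref{prop-1}, only the elementary facts that block-positivity along the cut $\mathscr{A|A'\!B'}$ means $\bra{a}\zeta_{\rm BP}\ket{a}\ge 0$ as an operator on $\cH^{\mathscr{A'\!B'}}$, and that ${\rm tr}[PQ]\ge 0$ for positive $P,Q$; and (ii) slightly more structural information, namely that the elementary maps admitted by the hypothesis yield \emph{product} (not merely separable) outputs, which dovetails with the physical interpretation~(\ref{EA-resolution}) of positive maps followed by one-sided EB operations. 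The paper's route is shorter and stays within the dual-cone formalism it has just set up; yours is more transparent about what the maps actually do. You also correctly isolate the crux shared by both arguments: it is block-positivity with respect to the cut $\mathscr{A|A'\!B'}$ --- the one separating the output space of $\mathscr{A}$ from the input spaces --- that forces positivity of the contraction over the primed systems, and the convexity/linearity step at the end is indeed routine.
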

\begin{proof}
Substituting $\zeta_{\rm BP}^{\mathscr{A|A'\!B'}} \otimes
\varrho^{\mathscr{B}}$ for $\Omega_{\Phi}^{\mathscr{A\!BA'\!B'}}$
in (\ref{proposition-1}), we obtain that ${\rm
tr}_{\mathscr{A'\!B'}} [\zeta_{\rm BP}^{\mathscr{A|A'\!B'}}
\ket{\chi^{\mathscr{A'\!B'}}\!} \bra{\chi^{\mathscr{A'\!B'}}\!} ]
= \tilde{\varrho}^{\mathscr{A}} \ge 0$ and ${\rm
tr}_{\mathscr{A\!B}} [ \xi_{\rm BP}^{\mathscr{A|B}}
\tilde{\varrho}^{\mathscr{A}} \otimes \varrho^{\mathscr{B}} ] \ge
0$, thus, Eq.~(\ref{proposition-1}) holds. By exchanging
$\mathscr{A}\leftrightarrow\mathscr{B}$ it is clear that the
operator $\varrho^{\mathscr{A}} \otimes \xi_{\rm
BP}^{\mathscr{B|A'\!B'}}$ also satisfies the requirement
(\ref{proposition-1}).
\end{proof}

Define $A^{\mathscr{B}}
=\ket{\varphi^{\mathscr{B}}}\bra{\psi^{\mathscr{B}}}$, then
$\zeta_{\rm BP}^{\mathscr{A|A'\!B'}} \otimes
\ket{\varphi^{\mathscr{B}}}\bra{\varphi^{\mathscr{B}}} =
(I^{\mathscr{A\!A'\!B'}}\otimes A^{\mathscr{B}}) \Xi_{{\rm
BP}}^{\mathscr{A\!B|A'\!B'}} (I^{\mathscr{A\!A'\!B'}}\otimes
A^{\mathscr{B}\dag})$ for a suitable $\Xi_{{\rm
BP}}^{\mathscr{A\!B|A'\!B'}}$. Consequently, the map corresponding
to $\zeta_{\rm BP}^{\mathscr{A|A'\!B'}} \otimes
\ket{\varphi^{\mathscr{B}}}\bra{\varphi^{\mathscr{B}}}$ is a
concatenation of a positive map $\Lambda^{\mathscr{A\!B}}$ (given
by Choi matrix $\Xi_{{\rm BP}}^{\mathscr{A\!B|A'\!B'}}$) followed
by an EB operation $\cO_{{\rm EB}}[\bullet] = A \bullet A^{\dag}$
acting on subsystem $\mathscr{B}$. Similarly,
$\varrho^{\mathscr{A}} \otimes \zeta_{\rm
BP}^{\mathscr{B|A'\!B'}}$ describes a positive map on
$\mathscr{A\!B}$ followed by some EB operation applied to
subsystem $\mathscr{A}$. As a result, the subset of PEA maps
characterized by Proposition~\ref{prop-2} can be understood as
mixture of concatenations of positive maps with EB operations
applied on one of the subsystems (see also Fig.~\ref{figure3})
\begin{equation}
\label{EA-resolution}
\Phi^{\mathscr{A\!B}} \!=\! \sum_{k} (\cO_{{\rm EB}~k}^{\mathscr{A(B)}}\otimes {\rm Id}^{\mathscr{B(A)}})\circ \Lambda_k^{\mathscr{A\!B}}\,.
\end{equation}

\begin{figure}
\includegraphics[width=8.5cm]{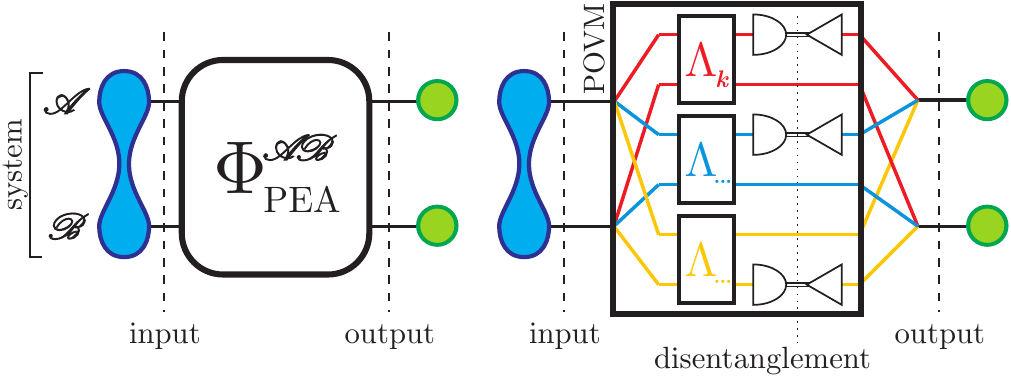}
\caption{\label{figure3} PEA maps and physical meaning of
Proposition~\ref{prop-2} (positive maps $\{\Lambda_k\}$ are
followed by one-sided EB operations).}
\end{figure}

Moreover, if we replace in Proposition~\ref{prop-2} the
block-positive operators $\zeta_{\rm BP}^{\mathscr{A|A'\!B'}}$ and
$\zeta_{\rm BP}^{\mathscr{B|A'\!B'}}$ by positive ones
$\varrho^{\mathscr{AA'\!B'}}$ and $\varrho^{\mathscr{BA'\!B'}}$,
respectively, then the corresponding Choi matrix will
automatically be positive and the associated map will be a fair CP
map.
\begin{corollary}
\label{corollary-biseparable} If ${\rm
tr}_{\mathscr{A\!B}}\Omega_{\Phi}^{\mathscr{A\!BA'\!B'}} =
(d^{\mathscr{A}}d^{\mathscr{B}})^{-1} I^{\mathscr{A'\!B'}}$ and
$\Omega_{\Phi}^{\mathscr{A\!BA'\!B'}}$ is a convex sum of density
operators $\varrho^{\mathscr{A|BA'\!B'}}$ and
$\varrho^{\mathscr{B|AA'\!B'}}$ (separable w.r.t. partitions
$\mathscr{A|BA'\!B'}$ and $\mathscr{B|AA'\!B'}$, respectively),
then $\Phi^{\mathscr{A\!B}}$ is an EA channel.
\end{corollary}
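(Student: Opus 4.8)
The plan is to recognize the hypothesis as the genuinely \emph{positive} specialization of Proposition~\ref{prop-2} (the one flagged in the paragraph immediately preceding this corollary) and then to invoke the identity EA = PEA $\cap$ CPT. First I would rewrite the two separability assumptions in explicit product form. By definition, a density operator separable w.r.t. $\mathscr{A|BA'\!B'}$ is a convex combination $\sum_j p_j\, \varrho_j^{\mathscr{A}}\otimes\varrho_j^{\mathscr{BA'\!B'}}$, and a density operator separable w.r.t. $\mathscr{B|AA'\!B'}$ is a convex combination $\sum_j q_j\, \varrho_j^{\mathscr{AA'\!B'}}\otimes\varrho_j^{\mathscr{B}}$. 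Hence the assumed decomposition of $\Omega_{\Phi}^{\mathscr{A\!BA'\!B'}}$ is precisely a convex sum of operators $\varrho^{\mathscr{AA'\!B'}}\otimes\varrho^{\mathscr{B}}$ and $\varrho^{\mathscr{A}}\otimes\varrho^{\mathscr{BA'\!B'}}$, i.e.\ the building blocks of Proposition~\ref{prop-2} with the block-positive factors $\zeta_{\rm BP}^{\mathscr{A|A'\!B'}}$, $\zeta_{\rm BP}^{\mathscr{B|A'\!B'}}$ replaced by positive ones.

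Second, I would establish the two defining properties. For PEA: any positive operator is \emph{a fortiori} block-positive across every cut, so $\varrho^{\mathscr{AA'\!B'}}$ is an admissible $\zeta_{\rm BP}^{\mathscr{A|A'\!B'}}$ and $\varrho^{\mathscr{BA'\!B'}}$ an admissible $\zeta_{\rm BP}^{\mathscr{B|A'\!B'}}$; the decomposition of Step~1 then satisfies the hypothesis of Proposition~\ref{prop-2} verbatim, whence $\Phi^{\mathscr{A\!B}}$ is PEA. For CP: since $\Omega_{\Phi}^{\mathscr{A\!BA'\!B'}}$ is now a convex combination of genuine density operators it is positive semidefinite, so by the Choi criterion $\Phi^{\mathscr{A\!B}}$ is CP. The assumed normalization ${\rm tr}_{\mathscr{A\!B}}\Omega_{\Phi}^{\mathscr{A\!BA'\!B'}} = (d^{\mathscr{A}}d^{\mathscr{B}})^{-1}I^{\mathscr{A'\!B'}}$ is exactly the Choi-level expression of trace preservation, so $\Phi^{\mathscr{A\!B}}$ is CPT. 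Combining the PEA and CPT conclusions through EA = PEA $\cap$ CPT finishes the argument.

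The argument is essentially bookkeeping of which partition each block inhabits, so I do not expect a genuine analytical obstacle. The only point that requires care is the matching of labels: one must check that the separability partition $\mathscr{A|BA'\!B'}$ lines up with the block $\varrho^{\mathscr{A}}\otimes\varrho^{\mathscr{BA'\!B'}}$ and $\mathscr{B|AA'\!B'}$ with $\varrho^{\mathscr{AA'\!B'}}\otimes\varrho^{\mathscr{B}}$ (and not the crossed assignment), and that it is precisely the implication ``positive $\Rightarrow$ block-positive'' that upgrades Proposition~\ref{prop-2} from a purely PEA statement into one that additionally delivers a legitimate CP, and hence EA, map.
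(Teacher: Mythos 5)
Your proposal is correct and follows essentially the same route as the paper: the paper's own justification (given in the paragraph preceding the corollary, combined with the identity EA = PEA $\cap$ CPT) is precisely that replacing the block-positive factors $\zeta_{\rm BP}$ in Proposition~\ref{prop-2} by positive ones simultaneously certifies the PEA property (since positive operators are block-positive) and forces the Choi matrix to be positive semidefinite, hence CP, with the partial-trace condition supplying trace preservation. Your careful matching of the partitions $\mathscr{A|BA'\!B'}$ and $\mathscr{B|AA'\!B'}$ to the respective tensor blocks is exactly the bookkeeping the paper leaves implicit.
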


Let us note that such states
$\Omega_{\Phi}^{\mathscr{A\!BA'\!B'}}$ belong to a family of
so-called biseparable states (convex hull of states separable with
respect to some bipartite cut). Unfortunately, only a little is
known about biseparability
detection~\cite{huber-2010,barreiro-2010,jungnitsch-2011,kampermann-2012},
however, Corollary~\ref{corollary-biseparable} encourages its
deeper investigation (see, e.g., a recent approach in
Ref.~\cite{huber-2013}).

\section{Case study: depolarizing channels} Given a quantum
channel $\Phi^{\mathscr{A\!B}}$, one can settle the question of
its being EA in the affirmative by finding either the
resolution~(\ref{EA-resolution}) or the resolution of
Corollary~\ref{corollary-biseparable}. Once the resolution is
found, it guarantees that $\Phi^{\mathscr{A\!B}}$ is a PEA map,
and, consequently, the channel is EA. As an example we examine a
family of depolarizing channels which can act either locally or
globally on the system reflecting the physical situation of
individual or common baths, respectively.

\begin{figure}
\includegraphics[width=8.5cm]{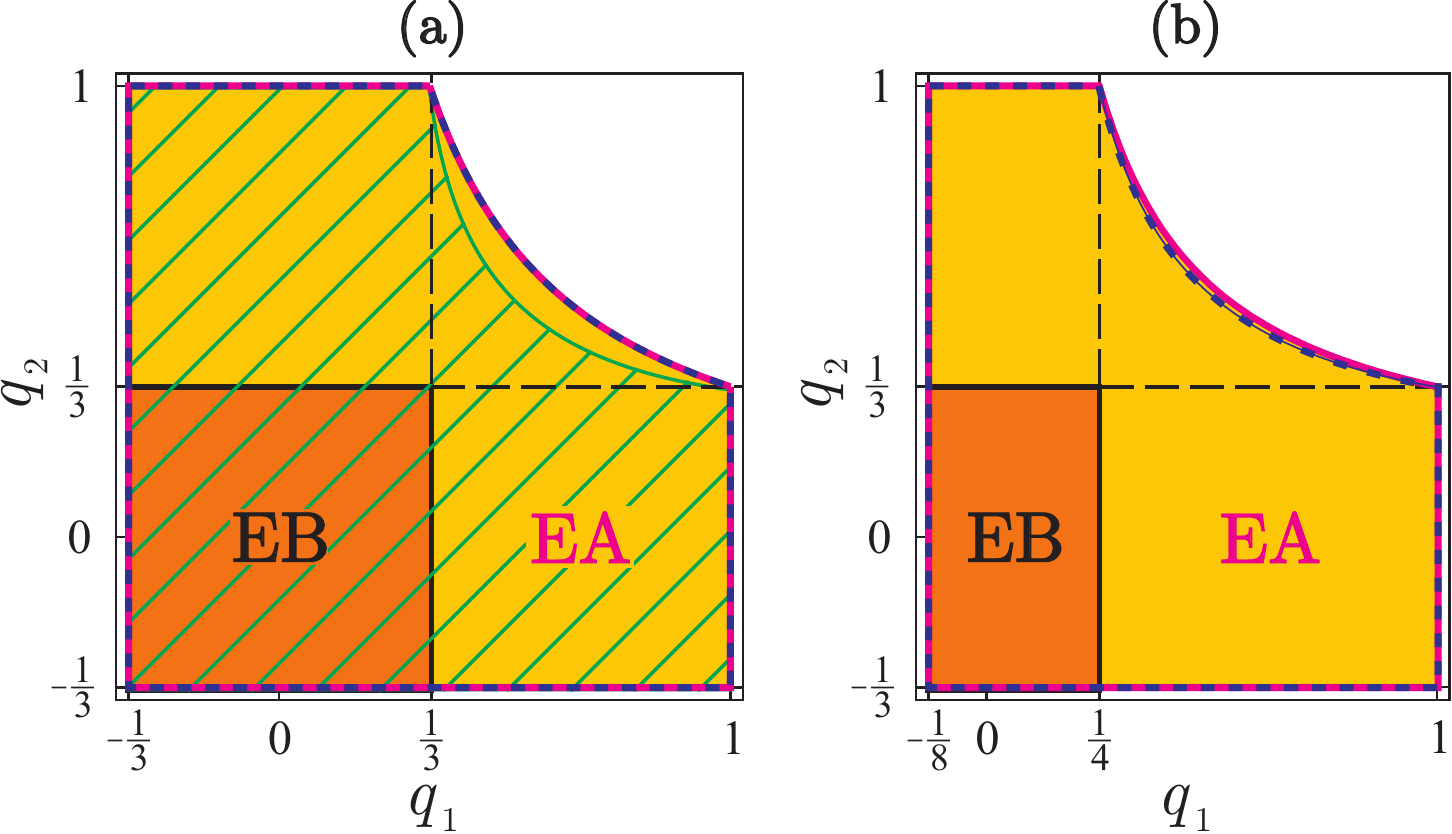}
\caption{\label{figure4} Area of parameters $(q_1,q_2)$ where the
local-depolarizing channel
$\Phi_{q_1}^{\mathscr{A}}\otimes\Phi_{q_2}^{\mathscr{B}}$ is EA:
(a) two-qubit system, $d^{\mathscr{A}}=d^{\mathscr{B}}=2$; (b)
qutrit-qubit system, $d^{\mathscr{A}}=3$, $d^{\mathscr{B}}=2$. The
validity of Proposition~\ref{prop-2} is justified inside the
dashed region, which provides all EA channels in case (a) and a
subset of EA channels in case (b). Being applied to case (a),
Corollary~\ref{corollary-biseparable} detects EA behavior inside
the green hatching.}
\end{figure}

The depolarizing channel on $d$-dimensional system is defined
through $\Phi_q = q {\rm Id} + (1-q) {\rm Tr}$, where ${\rm Tr}[X]
= \tr{X} \frac{1}{\rm d} I_d$ is the trace map and
$q\in[-\frac{1}{d^2-1},1]$. Note that $\Phi_q$ is EB if and only
if $-\frac{1}{d^2-1} \le q \le \frac{1}{d+1}$ (see
Appendix~\ref{appendix-b}). A bipartite system $\mathscr{A\!B}$
can be affected by a local depolarizing noise of the form
$\Phi_{q_1}^{\mathscr{A}}\otimes\Phi_{q_2}^{\mathscr{B}}$, or a
global depolarizing noise of the form $\Phi_{q}^{\mathscr{A\!B}}$.

Firstly, we illustrate the efficiency of the derived criteria by
examples of $2\times 2$ and $3\times 2$ systems for which the
exact solutions can be readily found thanks to the
Peres--Horodecki criterion~\cite{peres-1996,horodecki-1996}: in
the case $d^{\mathscr{A}}=d^{\mathscr{B}}=2$,
$\Phi_{q_1}^{\mathscr{A}}\otimes\Phi_{q_2}^{\mathscr{B}}$ is EA if
$q_1 q_2 \le \frac{1}{3}$ and $\Phi_{q}^{\mathscr{A\!B}}$ is EA if
$q \le \frac{1}{3}$; in the case $d^{\mathscr{A}}=3$ and
$d^{\mathscr{B}}=2$,
$\Phi_{q_1}^{\mathscr{A}}\otimes\Phi_{q_2}^{\mathscr{B}}$ is EA if
$q_1 (9 q_2 - 1) \le 2$ and $\Phi_{q}^{\mathscr{A\!B}}$ is EA if
$q \le \frac{1}{4}$. The resolution (\ref{EA-resolution}) holds
true for all the above two-qubit EA channels, i.e.
Proposition~\ref{prop-2} reproduces the exact results (see
Fig.~\ref{figure4}a and Appendices~\ref{appendix-d} and
\ref{appendix-f}). As far as Corollary~\ref{corollary-biseparable}
is concerned, our analysis shows that it allows us to detect the
EA property of a smaller set of maps
$\Phi_{q_1}^{\mathscr{A}}\otimes\Phi_{q_2}^{\mathscr{B}}$ (see
Appendix~\ref{appendix-e} and Fig.~\ref{figure4}a). Analyzing
channels acting on qutrit-qubit systems, we succeeded in
constructing resolution (\ref{EA-resolution}) for a subset of EA
channels which is slightly smaller than the whole set of EA
channels (see Appendix~\ref{appendix-d} and Fig.~\ref{figure4}b
for local channels and Appendix~\ref{appendix-f} for global ones).

In what follows, we consider bipartite systems $\mathscr{A\!B}$
with $d^{\mathscr{A}}=d^{\mathscr{B}}=d$, where $d$ is arbitrary.

\begin{figure}
\includegraphics[width=8.5cm]{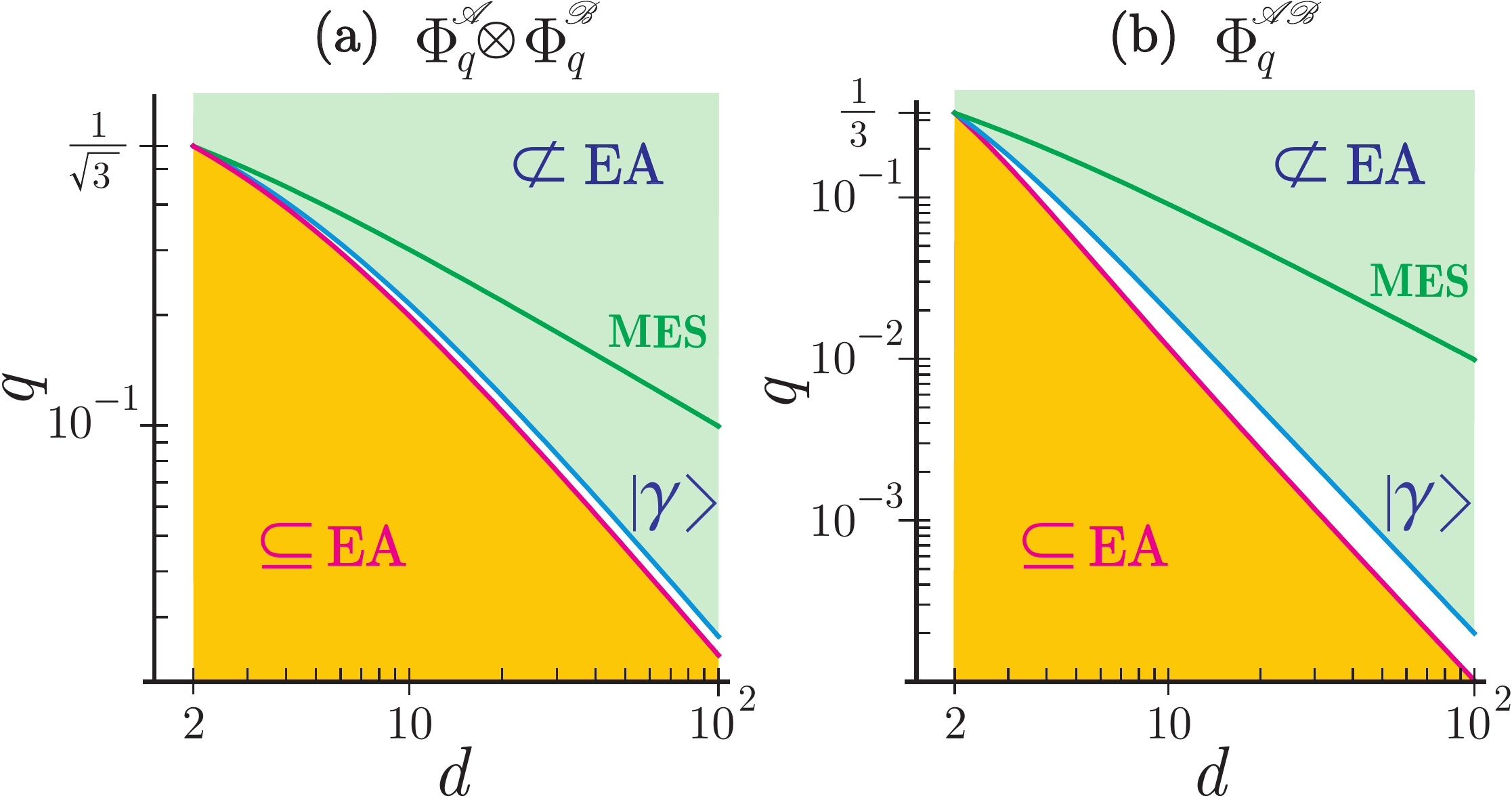}
\caption{\label{figure5} Local (a) and global (b) depolarizing
channels that surely annihilate or preserve entanglement of
$d\times d$ systems. Entanglement of the state $\ket{\gamma} =
\frac{1}{\sqrt{2}}(\ket{11}+\ket{dd})$ is more robust than that of
maximally entangled state $\ket{\Psi_{+}}$ (MES).}
\end{figure}

For a local channel
$\Phi_{q_1}^{\mathscr{A}}\otimes\Phi_{q_2}^{\mathscr{B}}$ it is
possible to find resolution (\ref{EA-resolution}) explicitly (see
Appendix~\ref{appendix-d}) if
\begin{equation}
\label{inequality} (d^2-1) \,q_1 q_2 \le 1 +
\frac{(d-2)(d+1)}{d+2}(q_1+q_2).
\end{equation}
Hence, for these values of parameters $q_1$ and $q_2$ the channel
is EA. Putting $q_{1,2}=q$ in (\ref{inequality}), we obtain that
$\Phi_q$ is 2LEA if $q \le q_{\rm EA}^{\rm local} =
\frac{d-2+d\sqrt{\frac{2d}{d+1}}}{(d-1)(d+2)}$ [see
Fig.~\ref{figure5}(a)], which determines a larger set than the EB
condition $q \le \frac{1}{d+1}$.

Consider a global depolarizing channel $\Phi_{q}^{\mathscr{A\!B}}$
acting on a pair of $d$-dimensional subsystems $\mathscr{A}$ and
$\mathscr{B}$ simultaneously. Such a noise is EB if and only if $q
\le \frac{1}{d^2+1}$. However, the noise disentangles
$\mathscr{A}$ from $\mathscr{B}$, hence it is EA, if $q \le q_{\rm
EA}^{\rm global} = (d+2)/[(d+1)(d^2-d+2)]$ (see
Fig.~\ref{figure5}b), which we showed by an explicit construction
of resolution~(\ref{EA-resolution}) in Appendix~\ref{appendix-f}.

Finally, we would like to give a counterintuitive example of an
entangled state which turns out to be more robust in the discussed
dissipative dynamics than the maximally entangled state
$\ket{\Psi_{+}^{\mathscr{A\!B}}} = \frac{1}{\sqrt{d}}
\sum_{i=1}^{d} \ket{i \otimes i}$. It can be readily seen that the
state $(\Phi_{q}^{\mathscr{A}}\otimes\Phi_{q}^{\mathscr{B}})
[\ket{\Psi_{+}^{\mathscr{A\!B}}} \bra{\Psi_{+}^{\mathscr{A\!B}}}]
\equiv (\Phi_{q^2}^{\mathscr{A}}\otimes{\rm
Id}^{\mathscr{B}})[\ket{\Psi_{+}^{\mathscr{A\!B}}}
\bra{\Psi_{+}^{\mathscr{A\!B}}}]$ becomes separable if
$\Phi_{q^2}^{\mathscr{A}}$ is EB, i.e. $q \le q_{\rm MES}^{\rm
local} = \frac{1}{\sqrt{d+1}}$. Similarly,
$\Phi_{q}^{\mathscr{A\!B}}[\ket{\Psi_{+}^{\mathscr{A\!B}}}
\bra{\Psi_{+}^{\mathscr{A\!B}}}] \equiv
(\Phi_{q}^{\mathscr{A}}\otimes{\rm
Id}^{\mathscr{B}})[\ket{\Psi_{+}^{\mathscr{A\!B}}}
\bra{\Psi_{+}^{\mathscr{A\!B}}}]$ becomes separable if
$\Phi_{q}^{\mathscr{A}}$ is EB, i.e. $q \le q_{\rm MES}^{\rm
global} = \frac{1}{d+1}$. Consider now a state
$\ket{\gamma^{\mathscr{A\!B}}} = \frac{1}{\sqrt{2}}(\ket{1 \otimes
1}+\ket{d \otimes d})$ which is not maximally entangled (if $d >
2$). Surprisingly, the states
$(\Phi_{q}^{\mathscr{A}}\otimes\Phi_{q}^{\mathscr{B}})[\ket{\gamma^{\mathscr{A\!B}}}
\bra{\gamma^{\mathscr{A\!B}}}]$ and
$\Phi_{q}^{\mathscr{A\!B}}[\ket{\gamma^{\mathscr{A\!B}}}
\bra{\gamma^{\mathscr{A\!B}}}]$ remain non-positive under partial
transposition~\cite{peres-1996,horodecki-1996} and, consequently,
are entangled if $q > q_{\rm nEA}^{\rm local} =
\frac{1+\sqrt{3}}{d+1+\sqrt{3}}$ and $q
> q_{\rm nEA}^{\rm global} = \frac{2}{d^2+2}$, respectively.
These results are depicted in Fig.~\ref{figure5}. A narrow gap
between channels that are surely EA and channels that are
definitely not EA underlines the importance of the state
$\ket{\gamma^{\mathscr{A\!B}}}$ in identifying potentially
dangerous noises in applications.

\section{Summary}
The analogy between the definitions of EB channels and CP maps
based on the consideration of map extensions, stimulates us to
introduce the concept of PEA maps as counterparts of positive maps
acting on a composite system $\mathscr{AB}$. The cone of PEA maps
is invariant under concatenation with partially positive maps.
This fact enabled us to find the necessary and sufficient
conditions for PEA maps as well as to find explicit form of the
dual cone of maps $\Phi_{\rm d.c.}^{\mathscr{A\!B}}[X] = \sum_{k}
\tr{F_k X} \xi_{{\rm BP}~k}^{\mathscr{A|B}}$, $F_k \ge 0$. This
form resembles measure-and-prepare procedures (being EB) but
differs in the use of block-positive operators. Based on these
criteria and in analogy with the entanglement theory one may
introduce the concept of EA witnesses. Imposing the conditions of
CP and trace-preservation on PEA maps we formulated sufficient
criteria for EA channels possessing a clear physical
interpretation illustrated in Fig.~\ref{figure3}. The derived
criteria were used in the analysis of local and global
depolarizing channels, for which we identified maximum noise
levels; going beyond those levels leads to entanglement
annihilation.

\begin{acknowledgments}
This work was supported by EU integrated project 2010-248095
(Q-ESSENCE), COST Action MP1006, APVV-0646-10 (COQI) and VEGA
2/0127/11 (TEQUDE). S.N.F. acknowledges support from the National
Scholarship Programme of the Slovak Republic, the Dynasty
Foundation, the Russian Foundation for Basic Research under
project 12-02-31524-mol-a.
\end{acknowledgments}

\appendix

\section{\label{appendix-a}Matrix representation of maps}
A linear map $\Phi:\cT(\cH_{\rm in})\rightarrow\cT(\cH_{\rm out})$
can also be defined via the $d_{\rm out}^2 \times d_{\rm in}^2$
matrix $\cE_{\Phi}$ with entries $(\cE_{\Phi})_{jk} =
(\tr{o_j^{\dag} o_j} \tr{\iota_k^{\dag} \iota_k})^{-1}
\tr{o_j^{\dag} \Phi[\iota_k]}$, where $\{\iota_k\}_{k=0}^{d_{\rm
in}^2-1}$ and $\{o_j\}_{j=0}^{d_{\rm out}^2-1}$ are orthogonal
operator bases in $\cT(\cH_{\rm in})$ and $\cT(\cH_{\rm out})$,
respectively. As a basis, we use normalized generalized Pauli
(Gell-Mann) matrices $\{\gamma_j\}_{j=0}^{d^2-1}$ satisfying the
relations $\gamma_j^{\dag}=\gamma_j$, $\tr{\gamma_j\gamma_k} =
\delta_{jk}$, and $\gamma_0 = \frac{1}{d} I_d$. Using such a
basis, one can readily see that, in case $d_{\rm in,out} = d$, the
matrix representation of depolarizing channel $\Phi_q$ reads
$\cE_{\Phi_q} = {\rm diag}(1,q,\ldots,q)$.

In a matrix representation, a concatenation of maps corresponds to
a conventional matrix product: $\cE_{\Upsilon \circ \Phi} =
\cE_{\Phi} \cE_{\Upsilon}$. Also, $\cE_{\Phi \otimes \Upsilon} =
\cE_{\Phi} \otimes \cE_{\Upsilon}$. These properties are
especially pleasing for diagonal matrices (depolarizing maps).

\section{\label{appendix-b}EB depolarizing channels} Let us make a change of
variable $q = \left[ d(2\mu-1)-1 \right]/(d^2-1)$, then the Choi
matrix $\Omega_{q}$ of the depolarizing map $\Phi_q$ is equal to
the partially transposed Werner state $\varrho_{\mu}^{\Gamma}$,
where $\varrho_{\mu} = \mu \frac{2}{d(d+1)}P_+ + (1-\mu)
\frac{2}{d(d-1)}P_-$ is a convex combination of projectors onto
symmetric and antisymmetric subspaces of
$\cH_d\otimes\cH_d$~\cite{werner-1989}. The state $\varrho_{\mu}$
is known to be separable if and only if it is positive under
partial transposition, i.e. $\frac{1}{2} \le \mu \le
1$~\cite{werner-1989}. It means that $\Omega_{q}$ is separable
and, consequently, $\Phi_q$ is EB if $q\in[-\frac{1}{d^2-1},
\frac{1}{d+1}]$.

\section{\label{appendix-c}Positive bipartite maps}
Since positive maps on operators $\cT(\cH_d\otimes\cH_d)$ are
quite needed, we define a two-parametric map $\Lambda_{st}$ by the
following matrix representation:
\begin{equation}
\label{lambda} \Lambda_{st}={\rm diag}(1,\underset{d^2-1~{\rm
times}}{\underbrace{s,\ldots,s}};\underset{d^2-1~{\rm
times}}{\underbrace{s,\underset{d^2-1~{\rm
times}}{\underbrace{t,\ldots,t}};\cdots;s,\underset{d^2-1~{\rm
times}}{\underbrace{t,\ldots,t}}}}).
\end{equation}

\noindent The map (\ref{lambda}) is surely positive if
\begin{equation}
\label{lamda-positivity} 0 \le s \le t \le
\frac{1}{d-1}+\left(1-\frac{1}{d-1}\right)s,
\end{equation}

\noindent which is validated by checking block-positivity of its
Choi matrix $\Omega_{\Lambda_{st}}^{\mathscr{A\!BA'\!B'}}$ via the
method of Ref.~\cite{skowronek-zyczkowski} [positivity of
operators
$\bra{y^{\mathscr{A'\!B'}}}\Omega_{\Lambda_{st}}^{\mathscr{A\!BA'\!B'}}\ket{y^{\mathscr{A'\!B'}}}\in\cT(\cH^{\mathscr{A\!B}})$].

For systems $\mathscr{AB}$, where $d^{\mathscr{A}} \ne
d^{\mathscr{B}}$, one can use a straightforward modification of
(\ref{lambda}) with an appropriate number of terms. Such a map
will be positive if (\ref{lamda-positivity}) is fulfilled for
$d=\max(d^{\mathscr{A}},d^{\mathscr{B}})$.

\section{\label{appendix-d}Local depolarizing EA channels}
For $d\times d$ systems, the local depolarizing channel
$\Phi_{q_1}^{\mathscr{A}}\otimes\Phi_{q_2}^{\mathscr{B}}$ is
compatible with resolution (\ref{EA-resolution}) and,
consequently, is EA whenever $q_1$ and $q_2$ satisfy inequality
(\ref{inequality}). The resolution takes the form
\begin{equation}
\Phi_{q_1}^{\mathscr{A}}\otimes\Phi_{q_2}^{\mathscr{B}} = \mu
(\Phi_{p}^{\mathscr{A}}\otimes {\rm Id}^{\mathscr{B}}) \circ
\Lambda_{s_1 t_1}^{\mathscr{A\!B}} + (1-\mu)({\rm
Id}^{\mathscr{A}} \otimes \Phi_{p}^{\mathscr{B}}) \circ
\Lambda_{s_2 t_2}^{\mathscr{A\!B}}, \nonumber
\end{equation}

\noindent where
\begin{equation}
\mu=\frac{1}{2}+\frac{d+1}{2d}(q_2-q_1), \qquad
-\frac{1}{d^2-1}\le p \le \frac{1}{d+1}, \nonumber
\end{equation}

\noindent i.e. $\Phi_p$ is EB. Inequality (\ref{inequality})
transforms into equality if $p = \frac{1}{d+1}$. The maps
$\Lambda_{s_i t_i}$, $i=1,2$, are given by formula (\ref{lambda}),
where
\begin{eqnarray}
&& s_{1,2} =
\frac{2(d+1)}{d+2}\frac{(d+1)q_{2,1}-q_{1,2}}{d+(d+1)(q_{2,1}-q_{1,2})}, \nonumber\\
&& t_{1,2}=\frac{1}{d-1}+ \left(1-\frac{1}{d-1}\right)s_{1,2}.
\nonumber
\end{eqnarray}

For qutrit-qubit system, one should substitute the corresponding
EB maps $\Phi_{p_1 \le 1/4}^{\mathscr{A}}$ and $\Phi_{p_2 \le
1/3}^{\mathscr{B}}$ for $\Phi_{p}$. Numerical optimization over
parameters $\mu$, $s_{1,2}$, and $t_{1,2}$ results in the area of
parameters $(q_1,q_2)$ shown in Fig.~\ref{figure4}b.

\section{\label{appendix-e}Application of Corollary~\ref{corollary-biseparable} to
local depolarizing two-qubit channels}

In the case of two qubits, we now find parameters $q_1$ and $q_2$
such that the Choi matrix $\Omega_{\Phi_{q_1} \otimes
\Phi_{q_2}}^{\mathscr{A\!BA'\!B'}}$ can be represented as a convex
sum of density operators separable w.r.t. partitions
$\mathscr{A|BA'\!B'}$ and $\mathscr{B|AA'\!B'}$, i.e.
\begin{eqnarray}
\Omega_{\Phi_{q_1} \otimes \Phi_{q_2}}^{\mathscr{A\!BA'\!B'}} =
\frac{1}{k_{\rm max}} & \sum\limits_{k=1}^{k_{\rm max}} & \big[
\mu \ket{\psi_k^{\mathscr{A}}}\bra{\psi_k^{\mathscr{A}}} \otimes
\varrho_k^{\mathscr{BA'\!B'}} \nonumber\\
&& + (1-\mu) \ket{\psi_k^{\mathscr{B}}}\bra{\psi_k^{\mathscr{B}}}
\otimes \tilde{\varrho}_k^{\mathscr{AA'\!B'}} \big]. \nonumber
\end{eqnarray}

\noindent This resolution takes place if the operators
$\frac{1}{2}\ket{\psi_k}\bra{\psi_k}$ form a symmetric
informationally complete POVM ($k=1,\ldots,4$) or the vectors
$\{\ket{\psi_k}\}$ are elements of a full set of mutually unbiased
bases ($k=1,\ldots,6$) (see, e.g.,~\cite{heinosaari-ziman-2012}),
\begin{eqnarray}
\varrho_k^{\mathscr{BA'\!B'}} &=& \big(a
\ket{\psi_k^{\ast\mathscr{A'}}}\bra{\psi_k^{\ast\mathscr{A'}}} + b
\ket{{\psi_k}_{\perp}^{\ast\mathscr{A'}}}\bra{{\psi_k}_{\perp}^{\ast\mathscr{A'}}}
\big) \nonumber\\
&& \otimes
\ket{\Psi_+^{\mathscr{BB'}}}\bra{\Psi_+^{\mathscr{BB'}}} + c
I^{\mathscr{BA'\!B'}}, \nonumber\\
\tilde{\varrho}_k^{\mathscr{AA'\!B'}} &=& \big( a
\ket{\psi_k^{\ast\mathscr{B'}}}\bra{\psi_k^{\ast\mathscr{B'}}} + b
\ket{{\psi_k}_{\perp}^{\ast\mathscr{B'}}}\bra{{\psi_k}_{\perp}^{\ast\mathscr{B'}}}
\big) \nonumber\\
&& \otimes
\ket{\Psi_+^{\mathscr{AA'}}}\bra{\Psi_+^{\mathscr{AA'}}} + c
I^{\mathscr{AA'\!B'}}, \nonumber
\end{eqnarray}

\noindent $\mu = (1-q_1)q_2/(q_1 + q_2 - 2 q_1 q_2)$, $a =
\frac{1}{2}(q_1 + q_2 + 8 q_1 q_2)$, $b = \frac{1}{2}(q_1 + q_2 -
4 q_1 q_2)$, and $c = \frac{1}{8}(1 - q_1 - q_2 + q_1 q_2)$.
However, the operators $\varrho_k$ and $\tilde{\varrho}_k$ are
positive only if $c \ge 0, a+c \ge 0, b+c \ge 0$. These
restrictions specify the region of parameters $q_1$ and $q_2$ by
the inequality $1 + 3 (q_1 + q_2) - 15 q_1 q_2 \ge 0$ which is
depicted in Fig.~\ref{figure4}a.

\section{\label{appendix-f}Global depolarizing EA channels}
For $d\times d$ systems, the global depolarizing channel
$\Phi_q^{\mathscr{A\!B}}$ is compatible with resolution
(\ref{EA-resolution}) and, consequently, is EA whenever $q \le
(d+2)/[(d+1)(d^2-d+2)]$. The resolution takes the form:
\begin{equation}
\Phi_{q}^{\mathscr{A\!B}} = \tfrac{1}{2} \big(
\Phi_p^{\mathscr{A}}\otimes{\rm Id}^{\mathscr{B}} + {\rm
Id}^{\mathscr{A}}\otimes\Phi_p^{\mathscr{B}} \big) \circ
\Lambda_{st}^{\mathscr{A\!B}}, \nonumber
\end{equation}

\noindent where $-\frac{1}{d^2-1} \le p \le \frac{1}{d+1}$ (i.e.
$\Phi_p$ is EB) and $\Lambda_{st}$ is given by formula
(\ref{lambda}) with $s=2/(d^2-d+2)$, $t=(d+2)s$.

For qutrit-qubit system, the weight factors
$(\frac{1}{2},\frac{1}{2})$ should be replaced by $(\mu,1-\mu)$, a
single positive map $\Lambda_{st}$ should be split into two
($\Lambda_{s_1t_1}$ and $\Lambda_{s_2t_2}$), the maps $\Phi_{p}$
should be replaced by the corresponding EB maps $\Phi_{p_1 \le
1/4}^{\mathscr{A}}$ and $\Phi_{p_2 \le 1/3}^{\mathscr{B}}$.
Numerical optimization over parameters $\mu$, $s_{1,2}$, and
$t_{1,2}$ shows that $\Phi_q^{\mathscr{A\!B}}$ is surely EA if $q
\le 0.21$ which is slightly less than the exact value
$\frac{1}{4}$.

\end{document}